\DeclarePairedDelimiter\abs\lvert\rvert
\newcolumntype{L}{@{}X@{}}
 \newtheorem{thm}{Theorem}
  \newtheorem{cor}[thm]{Corollary}
   \newtheorem{defn}[thm]{Definition}
\newcommand{\F}{\mathbb F_q}
\begin{document}
\title{Forms and Linear Network Codes}
\author{Johan~P.~Hansen}
\address{Department of Mathematics, Aarhus University}
\email{matjph@imf.au.dk}
\date{\today}
\maketitle
\begin{abstract}

We present a general theory to obtain linear network codes utilizing  forms and obtain explicit families of equidimensional vector spaces, in which any pair of distinct vector spaces intersect in the same small dimension.
The theory is inspired by  the methods  of the author utilizing the osculating spaces of Veronese varieties.

Linear network coding transmits information in terms of a basis of a vector space and the information is received as a basis of a possibly altered vector space. Ralf Koetter and Frank R. Kschischang 
introduced a metric on the set af vector spaces and showed that a minimal distance decoder for this metric achieves correct decoding if the dimension of the intersection of  the transmitted and received vector space is sufficiently large.

The vector spaces in our construction are equidistant in the above metric and the distance between any pair of vector spaces is large making them suitable for linear network coding.

The parameters of the resulting linear network codes are determined.

\end{abstract}


\subsection*{Notation}
\begin{itemize}
\item $\F$ is the finite field with $q$ elements of characteristic $p$.
\item $\mathbb F=\overline{\mathbb F_q}$ is an algebraic closure of $\F$.
\item $R_d = \mathbb F [X_0,\dots,X_n]_d$ and $R_d(\F)= \F  [X_0,\dots,X_n]_d$ the homogenous polynomials  of degree $d$ with coefficients in $\mathbb F$ and $\F$.
\item $R=\mathbb F [X_0,\dots,X_n] = \oplus_d  R_d$ and $R(\F)=\F [X_0,\dots,X_n] = \oplus_d  R_d(\F)$.
\item $G(l,N)$ is the Grassmannian of $l$-dimensional $\mathbb F$-linear subspaces of $\mathbb F^N$ and $G(l,N)(\F)$ its $\F$-rational points, i.e. $l$-dimensional $\F$-linear subspaces of $\F^N$.

\end{itemize}
\section{Introduction}
\subsection{Linear network coding}\label{network}
In linear network coding transmission is obtained by transmitting a number of packets into the network and each packet is regarded as a vector of length $N$ over a finite field $\F$. The packets travel the network through intermediate nodes, each forwarding $\F$-linear combinations of the packets it has available. Eventually the receiver tries to infer the originally transmitted packages from the packets that are recieved, see \cite{DBLP:citeseer_10.1.1.11.697} and \cite{Ho06arandom}.

Ralf Koetter and Frank R. Kschischang \cite{DBLP:journals/tit/KoetterK08}  endowed the Grassmannian $G(l,N)(\F)$ of $l$-dimensional $\F$-linear subspaces of $\F^N$ with the metric
\begin{equation}\label{dist}
\mathrm{dist}(V_1,V_2):=\dim_{\F}(V_1+V_2)-\dim_{\F}(V_1\cap V_2),
\end{equation}
where $V_1,V _2\in G(l,N)(\F)$.
\begin{defn}
A linear network code $\mathcal C \subseteq G(l,N)(\F)$ is a set of $l$-dimensional $\F$-linear subspaces of $\F^N$.

The size of the code $\mathcal C \subseteq G(l,N)(\F)$  is denoted by $\vert \mathcal C \vert$ and the minimal distance by
\begin{equation}
D(\mathcal C):= \min_{V_1,V_2 \in \mathcal C, V_1 \neq V_2} \mathrm{dist}(V_1,V_2)\ .
\end{equation}
The linear network code $\mathcal C$ is said to be of type $[N,l,\log_q \vert \mathcal C \vert, D(\mathcal C)]$. Its normalized weight is $\lambda = \frac{l}{N}$, its rate is $R= \frac{\log_q (\vert \mathcal C \vert)}{N l}$
and its normalized minimal distance  is $\delta = \frac{D(\mathcal C)}{2 l}$.
\end{defn}
Ralf Koetter and Frank R. Kschischang showed that a minimal distance decoder for this metric achieves correct decoding if the dimension of the intersection of  the transmitted and received vector-space is sufficiently large. Also they obtained  Hamming, Gilbert-Varshamov and Singleton coding bounds.

\section{Explicit construction of linear network codes from normalized homogenous polynomials}
Let $\mathcal N(e)$ be the normalized homogenous polynomials over $\F$ of degree $e$ in $X_0,\dots,X_n$ corresponding to monic polynomials in the single variable case.
Specifically, $\mathcal N (e)= \F  [X_0,\dots,X_n]_e /\sim$, where $F_1 \sim F_2$ iff $F_1=c F_2$ for some  constant $c \in \F^*$.  Let $\mathcal I(e)\subseteq \mathcal N(e)$ be the irreducible normalized homogenous polynomials. 

For any subset $\mathcal B \subseteq \F  [X_0,\dots,X_n]_e/\sim $ of normalized homogenous polynomials of degree $e$, we define the linear network code $\mathcal C_{\mathcal B}$ as a  collection of $\F$-linear subspaces $V_G$, one for each $\ G \in \mathcal B$, in the vector space of all homogenous forms of degree $d$ .

\begin{defn}\label{deff}
Let $G \in \F  [X_0,\dots,X_n]_e/\sim$ be a normalized homogenous polynomial in $X_0,\dots,X_n$ of degree $e$ with coefficients in $\F$. Assume that $G\neq 0$.

Let $d \geq e$ and consider the $\F$-linear injective morphism
\begin{equation}\label{morphism}
\begin{split}
\F  [X_0,\dots,X_n]_{d-e}\quad \rightarrow&\quad  \F  [X_0,\dots,X_n]_d\\
F \quad \mapsto& \quad G \cdot F
\end{split}
\end{equation}
with image \begin{equation}\label{VG}
V_G:=G \cdot \F  [X_0,\dots,X_n]_{d-e}\subseteq \F  [X_0,\dots,X_n]_d = \F^N\ ,
\end{equation}
which is a $\F$-linear subspace of dimension $l=\binom{n+d-e}{n}$ in the ambient vector space of dimension $N=\binom{n+d}{n}$.

For any subset $\mathcal B \subseteq \F  [X_0,\dots,X_n]_e/\sim $ of normalized homogenous polynomials of degree $e$, the linear network code $\mathcal C_{\mathcal B}\subseteq G(l,N)(\F)$ consists of all the linear subspaces in the vector space $\F  [X_0,\dots,X_n]_d$ of homogenous forms of degree $d$ with $d \geq e$, that are realized as images (\ref{VG}) for some $G \in \mathcal B$.
\begin{equation}
\mathcal C_{\mathcal B} = \{ V_G = G \cdot \F  [X_0,\dots,X_n]_{d-e} \vert \ G \in \mathcal B\} \subseteq G(l,N)(\F)\ .
\end{equation}
\end{defn}

In \cite{2012arXiv1210.7961H} we studied the resulting linear network codes $\mathcal C_B$, when $\mathcal B$ is the set of normalized homogenous polynomials  which are powers of linear terms, see Corollary \ref{linear}.
This amounted to the study of the osculating spaces of Veronese varieties.

In the present paper we present the resulting linear network codes $\mathcal C_B \subseteq G(l,N)(\F)$, when $\mathcal B$ more generally is any set of normalized
homogenous polynomials in $\F  [X_0,\dots,X_n]_e/\sim$, where each pair of unequal polynomials has the constants as their only common divisors generalising the above result. In particular we treat the case where $\mathcal B$ is the set of all irreducible normalized polynomials, see Corollary \ref{irr}.

\begin{thm}
Let $\mathcal B \subset \F  [X_0,\dots,X_n]_e/\sim$ be a set of normalized homogenous polynomials of degree $e$, such that
for any pair $G_1, G_2 \in \mathcal B$,  with $G_1 \neq G_2$, their only commen divisors are the constants.

 For $d \geq e$, let  $\mathcal C_{\mathcal B} \subseteq G(l,N)(\F)$  be the corresponding linear network code, as defined in Definition \ref{deff}.

The packet length of $\mathcal C_{\mathcal B}$ is $N= \binom{n+d}{n}$, the dimension of the ambient vector space. The vector spaces in the linear network code are equidimensional of dimension $l=\binom{n+d-e}{n}$ as $\F$-linear subspaces of the ambient  $N=\binom{n+d}{n}$-dimensional $\F$-vectorspace $\mathbb F_q^N$.

The number of vector spaces in the linear network code $\mathcal C_{\mathcal B}$  is the number $\vert \mathcal B \vert$ of elements in $\mathcal B$.

The elements in the code are equidistant in the metric $\mathrm{dist} (V_1,V_2)$ of (\ref{dist}) of Section \ref{network}. Let $V_1, V_2 \in \mathcal C_{\mathcal B}$ be vector spaces with $V_1 \neq V_2$.

If $d-e < e$, then $\dim_{\F}(V_1 \cap V_2) =0$ and 
\begin{equation}\label{zero}
\mathrm{dist} (V_1,V_2)=2 \ \binom{n+d-e}{n}
\end{equation}
If $d-e \geq e$, then $\dim_{\F}(V_1 \cap V_2) =\binom{n+d-2e}{n}$ and
\begin{equation}\label{nonzero}
\mathrm{dist} (V_1,V_2)=2\  \Bigg(\binom{n+d-e}{n}-\binom{n+d-2e}{n}\Bigg)
\end{equation}
\end{thm}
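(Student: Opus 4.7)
The approach I would take proceeds in three stages: establish the elementary dimension and counting facts, reduce the intersection computation to a divisibility question in the polynomial ring, and then invoke unique factorization to finish.

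First, the dimension statement and the code size are straightforward. Since $\F[X_0,\dots,X_n]$ is an integral domain and $G\neq 0$, multiplication by $G$ is an injective $\F$-linear map from $\F[X_0,\dots,X_n]_{d-e}$ to $\F[X_0,\dots,X_n]_d$, so $\dim_{\F} V_G=\binom{n+d-e}{n}$ and the ambient space has $\F$-dimension $\binom{n+d}{n}$. To see that distinct $G\in\mathcal B$ give distinct subspaces $V_G$, I would observe that $V_{G_1}=V_{G_2}$ forces $G_1=G_1\cdot 1\in V_{G_2}$, hence $G_1=cG_2$ for some $c\in \F^*$; normalization then yields $G_1=G_2$. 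This also shows $|\mathcal C_{\mathcal B}|=|\mathcal B|$.

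The heart of the argument is the intersection computation. Let $H\in V_{G_1}\cap V_{G_2}$, so $H=G_1F_1=G_2F_2$ with $F_i$ homogeneous of degree $d-e$. Since $\F[X_0,\dots,X_n]$ is a UFD and, by hypothesis, $G_1$ and $G_2$ share only constants as common divisors, they are coprime in the UFD sense. The relation $G_1\mid G_2F_2$ together with $\gcd(G_1,G_2)=1$ gives $G_1\mid F_2$, so $F_2=G_1K$ for a homogeneous $K$ of degree $d-2e$, and $H=G_1G_2K$. Conversely, every such product lies in both $V_{G_1}$ and $V_{G_2}$. Hence
\begin{equation*}
V_{G_1}\cap V_{G_2}=G_1G_2\cdot \F[X_0,\dots,X_n]_{d-2e}\ ,
\end{equation*}
which is $\{0\}$ when $d<2e$ (equivalently $d-e<e$), and which has $\F$-dimension $\binom{n+d-2e}{n}$ when $d\ge 2e$, again by injectivity of multiplication by the nonzero polynomial $G_1G_2$.

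Finally, from $\dim(V_1+V_2)+\dim(V_1\cap V_2)=\dim V_1+\dim V_2=2l$, the defining formula for the metric gives
\begin{equation*}
\mathrm{dist}(V_1,V_2)=2l-2\dim_{\F}(V_1\cap V_2)\ ,
\end{equation*}
and substituting the two cases above produces (\ref{zero}) and (\ref{nonzero}). The only conceptual hurdle is recognising that the hypothesis ``only constants as common divisors'' is precisely the UFD coprimality needed to pass from $G_1\mid G_2F_2$ to $G_1\mid F_2$; once this is in hand, everything else is a routine dimension count.
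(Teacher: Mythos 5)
Your proposal is correct and follows essentially the same route as the paper: injectivity of multiplication by a nonzero form gives the dimensions, and unique factorization plus the coprimality hypothesis yields $V_{G_1}\cap V_{G_2}=G_1G_2\cdot \F[X_0,\dots,X_n]_{d-2e}$, from which the two distance formulas follow. You are in fact slightly more careful than the paper in spelling out the converse inclusion and in verifying that distinct $G\in\mathcal B$ give distinct subspaces, but the core argument is identical.
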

\begin{proof}
The morphism in (\ref{morphism}) maps injectively to the $\F$-vectorspace $\F  [X_0,\dots,X_n]_d$ of dimension $N= \dim_{\F}\F  [X_0,\dots,X_n]_d = \binom{n+d}{n}$. The dimension of the image is
$l=\dim_{\F} V_G=  \dim_{\F}\F  [X_0,\dots,X_n]_{d-e} = \binom{n+d-e}{n}$.

Let $V_{G_1}, V_{G_2} \in \mathcal C_{\mathcal B}$ be two vector spaces with $V_{G_1} \neq V_{G_2}$. Assume that  $H \in V_{G_1} \cap V_{G_2}$ with $H \neq 0$. Then $H=G_1 F_1 =G_2 F_2$ with $F_i \in \F  [X_0,\dots,X_n]_{d-e}$. By unique factorization this implies that $G_1$ divides $F_2$ and there is a unique $K \in \F  [X_0,\dots,X_n]_{d -2e}$ such that $F_2=G_1 K$.

If $\deg F_2 =d-e<e =\deg G_1$, this is impossible and $V_{G_1} \cap V_{G_2} = 0$ proving (\ref{zero}) by the definition of the metric in (\ref{dist}).

If $\deg F_2 =d-e\geq e =\deg G_1$ then $H= G_2 F_2 = G_2 G_1 K$ for a unigue  $K \in \F  [X_0,\dots,X_n]_{d-2e}$. Therefore 
\begin{equation}\dim _{\F} V_{G_1} \cap V_{G_2} = \dim_{\F}\F  [X_0,\dots,X_n]_{d-2e} =  \binom{n+d-2e}{n}\end{equation} proving (\ref{nonzero}) by the definition of the metric in (\ref{dist}).

\end{proof}
\subsection{The case when $\mathcal B$ is the set of all irreducible normalized polynomials}

Let $\mathcal N (e) =\F  [X_0,\dots,X_n]_e /\sim$ be the normalized homogenous polynomials of degree $e$.

As $\F  [X_0,\dots,X_n]_e $ is a vector space af dimension $\binom{n+e}{n}$ over $\F$, we have the following formula for the number $N(e)$ of normalized  homogenous polynomials of degree $e$:
\begin{equation}
N(e):=\vert \mathcal N(e) \vert =\frac{q^{\binom{n+e}{n}}-1}{q-1}\ .
\end{equation}

Let $\mathcal I(e)\subseteq \mathcal N(e)$ be the irreducible normalized homogenous polynomials and let $I(e):=\vert \mathcal I(e)\vert $ be their number. 

The number af products of $a_i$ elements from $\mathcal I(i)$ is $\binom{\mathcal I(i)+a_i-1}{a_i}$. Using unique factorisation in $\F  [X_0,\dots,X_n]$, we get that
\begin{equation}
N(e)= \sum_{1 a_1+ 2 a_2+\dots e a_e =e}\binom{ I(1)+a_1-1}{a_1}\dots \binom{ I(i)+a_e-1}{a_e}\ .
\end{equation}
and the rekursive formula for $I(e)$:
\begin{equation}\label{rekurs}
\begin{split}
I(e)=& N(e)\\
-& \sum_{1 a_1+ 2 a_2+\dots (e-1) a_{e-1} =e}\binom{ I(1)+a_1-1}{a_1}\dots \binom{ I(i)+a_{e-1}-1}{a_{e-1}}
\end{split}
\end{equation}
Similar expressions in the non-homogenous case are obtained  in \cite{MR2444942}, \cite{MR2516427},
\cite{MR0172872} and \cite{MR0153665}.

The same rekursive method permits to construct the elements in $\mathcal I(e)$ in the present case.

\begin{cor}\label{irr}
Let $\mathcal I \subset \F  [X_0,\dots,X_n]_e/\sim$ be the set of all irreducible normalized homogenous polynomials of degree $e$.

 For $d \geq e$, let  $\mathcal C_{\mathcal I} \subseteq G(l,N)(\F)$  be the corresponding linear network code, as defined in Definition \ref{deff}.

The packet length of $\mathcal C_{\mathcal I}$ is $N= \binom{n+d}{n}$, the dimension of the ambient vector space. The vector spaces in the linear network code are equidimensional of dimension $l=\binom{n+d-e}{n}$ as linear subspaces of the ambient  $N=\binom{n+d}{n}$-dimensional $\F$-vectorspace $\F^N$.

The number of vector spaces in the linear network code $\mathcal C_{\mathcal I}$  is the number $\vert \mathcal I \vert$ of elements in $\mathcal I$ and is determined rekursively by the formula (\ref{rekurs}).

The elements in the code are equidistant in the metric $\mathrm{dist} (V_1,V_2)$ of (\ref{dist}) of Section \ref{network}. Let $V_1, V_2 \in \mathcal C_{\mathcal B}$ be vector spaces with $V_1 \neq V_2$.

If $d-e < e$, then $\dim_{\F}(V_1 \cap V_2) =0$ and 
\begin{equation}
\mathrm{dist} (V_1,V_2)=2 \ \binom{n+d-e}{n}
\end{equation}
If $d-e \geq e$, then $\dim_{\F}(V_1 \cap V_2) =\binom{n+d-2e}{n}$ and
\begin{equation}
\mathrm{dist} (V_1,V_2)=2\  \Bigg(\binom{n+d-e}{n}-\binom{n+d-2e}{n}\Bigg)\ .
\end{equation}
\end{cor}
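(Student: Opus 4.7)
The plan is to recognize this corollary as a direct specialization of the preceding theorem to the subset $\mathcal B=\mathcal I\subset\F[X_0,\dots,X_n]_e/\sim$, supplemented by the recursion (\ref{rekurs}) already derived for $|\mathcal I|$.

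First I would verify the one hypothesis of the theorem that is not automatic, namely that any two distinct $G_1,G_2\in\mathcal I$ share only constants as common divisors. This is immediate from irreducibility: any nonconstant divisor of the irreducible $G_i$ must be an associate of $G_i$, so a shared nonconstant divisor would force $G_1$ and $G_2$ to be associates in $\F[X_0,\dots,X_n]$, contradicting the fact that they represent distinct classes in $\mathcal N(e)$. Once this coprimality hypothesis is in place, every quantitative assertion of the corollary, namely the packet length $N=\binom{n+d}{n}$, the subspace dimension $l=\binom{n+d-e}{n}$, and the two equidistance formulas in the regimes $d-e<e$ and $d-e\geq e$, follows by direct application of the theorem with $\mathcal B=\mathcal I$. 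The count of codewords $|\mathcal C_{\mathcal I}|=|\mathcal I|$ is precisely the quantity produced by the recursion (\ref{rekurs}), which was derived from unique factorization in $\F[X_0,\dots,X_n]$ together with the closed form $N(e)=(q^{\binom{n+e}{n}}-1)/(q-1)$.

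There is essentially no technical obstacle here: this is a genuine corollary in the strict sense. The only new content beyond the theorem is the trivial observation that distinct irreducibles are pairwise coprime, and the enumeration of $\mathcal I$ has already been carried out just before the statement. Accordingly I expect the written proof to consist of little more than these two pointers — invoke the coprimality, invoke the theorem, invoke (\ref{rekurs}).
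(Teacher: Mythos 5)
Your proposal is correct and matches the paper's treatment: the corollary is obtained by specializing the theorem to $\mathcal B=\mathcal I$, where the coprimality hypothesis holds because two distinct normalized irreducibles cannot be associates, and the cardinality $\vert\mathcal I\vert$ is supplied by the recursion (\ref{rekurs}) derived just before the statement. The paper offers no separate proof beyond this, so your two pointers are exactly the intended argument.
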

Parameters for the linear network codes $\mathcal C_{\mathcal I(e)} \subseteq G(l,N)(\mathbb F_2)$ constructed from $\mathbb F_2[X_0,X_1,X_2]$ are given in Table \ref{tabel} for $d=1,2,\dots, 10$ and $e=1,2,\dots,5$.

\subsection{The case when $\mathcal B$ is the set of powers of linear normalized polynomials}

In \cite{2012arXiv1210.7961H} we studied the resulting linear network codes $\mathcal C_B$, when $\mathcal B$ is the set of normalized homogenous polynomials  which are powers of linear terms.
This amounted to the study of the osculating spaces of Veronese varieties.

Let $\mathcal L(e)\subseteq \mathcal N(e)$ be the set of $e$-fold powers of normalized homogenous linear polynomials and let $\vert \mathcal L(e)\vert = N(1)= \frac{q^{\binom{n+1}{n}}-1}{q-1}$ denote their number.

\begin{cor}\label{linear}

For $d \geq e$, let  $\mathcal C_{\mathcal L} \subseteq G(l,N)(\F)$  be the corresponding linear network code, as defined in Definition \ref{deff}.

The packet length of $\mathcal C_{\mathcal L}$ is $N= \binom{n+d}{n}$, the dimension of the ambient vector space. The vector spaces in the linear network code are equidimensional of dimension $l=\binom{n+d-e}{n}$ as linear subspaces of the ambient  $N=\binom{n+d}{n}$-dimensional $\F$-vectorspace. 

The number of vector spaces in the linear network code $\mathcal C_{\mathcal L}$  is the number 
$\vert \mathcal L(e)\vert = N(1)= \frac{q^{\binom{n+1}{n}}-1}{q-1}$ of elements in $\mathcal L$.
The elements in the code are equidistant in the metric $\mathrm{dist} (V_1,V_2)$ of (\ref{dist}) of Section \ref{network}. Let $V_1, V_2 \in \mathcal C_{\mathcal B}$ be vector spaces with $V_1 \neq V_2$.

If $d-e < e$, then $\dim_{\F}(V_1 \cap V_2) =0$ and 
\begin{equation}
\mathrm{dist} (V_1,V_2)=2 \ \binom{n+d-e}{n}
\end{equation}
If $d-e \geq e$, then $\dim_{\F}(V_1 \cap V_2) =\binom{n+d-2e}{n}$ and
\begin{equation}
\mathrm{dist} (V_1,V_2)=2\  \Bigg(\binom{n+d-e}{n}-\binom{n+d-2e}{n}\Bigg)\ .
\end{equation}
\end{cor}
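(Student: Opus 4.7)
The plan is to deduce this corollary as a direct specialization of the preceding theorem to $\mathcal{B}=\mathcal{L}(e)$. Once the coprimality hypothesis of that theorem is verified for $\mathcal{L}(e)$, the equidimensionality statement, the formula for $\dim_{\F}(V_1\cap V_2)$, and the two distance equations transfer verbatim. The only extra work will be to compute the cardinality $|\mathcal{L}(e)|=N(1)$.

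To verify the hypothesis, I would observe that an element of $\mathcal{L}(e)$ is by construction an $e$-fold power $L^e$ of a normalized linear form $L\in\F[X_0,\dots,X_n]_1$. Since $R=\mathbb{F}[X_0,\dots,X_n]$ is a unique factorization domain and every nonzero linear form is irreducible in $R$, the polynomial $L^e$ has $L$ as its unique prime factor (with multiplicity $e$). Given two distinct classes $[L_1^e],[L_2^e]\in\mathcal{L}(e)$, the linear forms $L_1$ and $L_2$ are therefore non-associate primes in $R$, so every common divisor of $L_1^e$ and $L_2^e$ is a unit, i.e., a nonzero constant. This is exactly the hypothesis required by the theorem, and the distance and dimension claims follow immediately from it.

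For the cardinality, I would exhibit a bijection $[L]\mapsto[L^e]$ between the normalized linear forms $\mathcal{N}(1)$ and $\mathcal{L}(e)$. Well-definedness is immediate since $(cL)^e=c^eL^e$ with $c^e\in\F^*$, and surjectivity is built into the definition of $\mathcal{L}(e)$. For injectivity, if $L_1^e=cL_2^e$ for some $c\in\F^*$, then comparing prime factorizations in the UFD $R$ forces $L_1=\alpha L_2$ for some $\alpha\in\mathbb{F}^*$; since both $L_1$ and $L_2$ have coefficients in $\F$, in fact $\alpha\in\F^*$, so $[L_1]=[L_2]$ in $\mathcal{N}(1)$. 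Hence
\[
|\mathcal{L}(e)| \;=\; |\mathcal{N}(1)| \;=\; N(1) \;=\; \frac{q^{\binom{n+1}{n}}-1}{q-1},
\]
as claimed. I do not expect any real obstacle: the corollary is essentially a repackaging of the theorem together with a counting step. The one mildly delicate point is that the injectivity of $L\mapsto L^e$ must be proved via unique factorization in $R$ rather than by coefficient comparison, so that the argument also covers the case where the characteristic $p$ divides $e$ (where the Frobenius would otherwise intervene).
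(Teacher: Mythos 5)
Your proposal is correct and follows the route the paper intends: Corollary \ref{linear} is a direct application of the theorem with $\mathcal B=\mathcal L(e)$, and you supply the two details the paper leaves implicit, namely that distinct classes $[L_1^e]\neq[L_2^e]$ force $L_1,L_2$ to be non-associate irreducibles (so the coprimality hypothesis holds) and that $[L]\mapsto[L^e]$ is a bijection from $\mathcal N(1)$ onto $\mathcal L(e)$, giving $\vert\mathcal L(e)\vert=N(1)$. Your remark that injectivity should be argued via unique factorization rather than coefficient comparison, so as to cover the case $p\mid e$, is a sensible refinement consistent with the paper's reliance on unique factorization in its main proof.
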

\begin{table}\caption{Parameters for the linear network codes $\mathcal C_{\mathcal I(e)} \subseteq G(l,N)(\mathbb F_2)$ with $\mathcal I(e) \subseteq \mathbb F_2[X_0,X_1,X_2]_e$, see Corollary \ref{irr}}\label{tabel}
  \centering
  \tiny 
\begin{tabularx}{\textwidth}{
    c L r L c  
    c 
    r L r L r L r L r L r L r L r L r L r}
\toprule
    &&                     && $d$       && 1     && 2     && 3     && 4     && 5     && 6     && 7     && 8     && 9     && 10    \\
$e$ && $\abs{\mathcal C}$  && $N$       && 3     && 6     && 10    && 15    && 21    && 28    && 36    && 45    && 55    && 66    \\
\midrule
$1$ && $7$                 && $l$       && 1     && 3     && 6     && 10    && 15    && 21    && 28    && 36    && 45    && 55    \\
    &&                     && $D$       && 2     && 6     && 6     && 8     && 10    && 12    && 14    && 16    && 18    && 20    \\
    &&                     && $\lambda$ && 0,333 && 0,500 && 0,600 && 0,667 && 0,714 && 0,750 && 0,778 && 0,800 && 0,818 && 0,833 \\
    &&                     && $\delta$  && 1,000 && 1,000 && 0,500 && 0,400 && 0,333 && 0,286 && 0,250 && 0,222 && 0,200 && 0,182 \\
    &&                     && $R$       && 0,936 && 0,156 && 0,047 && 0,019 && 0,009 && 0,005 && 0,003 && 0,002 && 0,001 && 0,001 \\
\midrule
$2$ && $35$                && $l$       &&       && 1     && 3     && 6     && 10    && 15    && 21    && 28    && 36    && 45    \\
    &&                     && $D$       &&       && 2     && 6     && 12    && 14    && 18    && 22    && 26    && 30    && 34    \\
    &&                     && $\lambda$ &&       && 0,167 && 0,300 && 0,400 && 0,476 && 0,536 && 0,583 && 0,622 && 0,655 && 0,682 \\
    &&                     && $\delta$  &&       && 1,000 && 1,000 && 1,000 && 0,700 && 0,600 && 0,524 && 0,464 && 0,417 && 0,378 \\
    &&                     && $R$       &&       && 0,855 && 0,171 && 0,057 && 0,024 && 0,012 && 0,007 && 0,004 && 0,003 && 0,002 \\
\midrule
$3$ && $694$               && $l$       &&       &&       && 1     && 3     && 6     && 10    && 15    && 21    && 28    && 36    \\
    &&                     && $D$       &&       &&       && 2     && 6     && 12    && 20    && 24    && 30    && 36    && 42    \\
    &&                     && $\lambda$ &&       &&       && 0,100 && 0,200 && 0,286 && 0,357 && 0,417 && 0,467 && 0,509 && 0,545 \\
    &&                     && $\delta$  &&       &&       && 1,000 && 1,000 && 1,000 && 1,000 && 0,800 && 0,714 && 0,643 && 0,583 \\
    &&                     && $R$       &&       &&       && 0,944 && 0,210 && 0,075 && 0,034 && 0,017 && 0,010 && 0,006 && 0,004 \\
\midrule
$4$ && $26089$             && $l$       &&       &&       &&       && 1     && 3     && 6     && 10    && 15    && 21    && 28    \\
    &&                     && $D$       &&       &&       &&       && 2     && 6     && 12    && 20    && 30    && 36    && 44    \\
    &&                     && $\lambda$ &&       &&       &&       && 0,067 && 0,143 && 0,214 && 0,278 && 0,333 && 0,382 && 0,424 \\
    &&                     && $\delta$  &&       &&       &&       && 1,000 && 1,000 && 1,000 && 1,000 && 1,000 && 0,857 && 0,786 \\
    &&                     && $R$       &&       &&       &&       && 0,978 && 0,233 && 0,087 && 0,041 && 0,022 && 0,013 && 0,008 \\
\midrule
$5$ && $1862994$           && $l$       &&       &&       &&       &&       && 1     && 3     && 6     && 10    && 15    && 21    \\
    &&                     && $D$       &&       &&       &&       &&       && 2     && 6     && 12    && 20    && 30    && 42    \\
    &&                     && $\lambda$ &&       &&       &&       &&       && 0,048 && 0,107 && 0,167 && 0,222 && 0,273 && 0,318 \\
    &&                     && $\delta$  &&       &&       &&       &&       && 1,000 && 1,000 && 1,000 && 1,000 && 1,000 && 1,000 \\
    &&                     && $R$       &&       &&       &&       &&       && 0,992 && 0,248 && 0,096 && 0,046 && 0,025 && 0,015 \\
\bottomrule
\end{tabularx}
\end{table}

\newcommand{\etalchar}[1]{$^{#1}$}

\end{document}